\newcommand{\integ}{\mathbb{Z}}
\newcommand{\real}{\mathbb{R}}
\DeclareMathOperator{\conv}{conv}
\begin{document}

\title{Fast Convex Decomposition for Truthful Social Welfare Approximation}

\author{Dennis Kraft \and Salman Fadaei \and Martin Bichler}
\authorrunning{Kraft, Fadaei and Bichler}
\tocauthor{Dennis Kraft, Salman Fadaei, Martin Bichler}

\institute{Department of Informatics, TU M\"unchen, Munich, Germany \\ 
\email{dennis.kraft@in.tum.de, salman.fadaei@in.tum.de, bichler@in.tum.de}}
\maketitle


\begin{abstract}

Approximating the optimal social welfare while preserving truthfulness is a well studied problem in algorithmic mechanism design. Assuming that the social welfare of a given mechanism design problem can be optimized by an integer program whose integrality gap is at most $\alpha$, Lavi and Swamy~\cite{Lavi11} propose a general approach to designing a randomized $\alpha$-approximation mechanism which is truthful in expectation. Their method is based on decomposing an optimal solution for the relaxed linear program into a convex combination of integer solutions. Unfortunately, Lavi and Swamy's decomposition technique relies heavily on the ellipsoid method, which is notorious for its poor practical performance. To overcome this problem, we present an alternative decomposition technique which yields an $\alpha(1 + \epsilon)$ approximation and only requires a quadratic number of calls to an integrality gap verifier.

\keywords{Convex decomposition, truthful in expectation, mechanism design, approximation algorithms}

\end{abstract}


\section{Introduction}\label{sec:Introduction}

Optimizing the social welfare in the presence of self-interested players poses two main challenges to algorithmic mechanism design. On the one hand, the social welfare consists of the player's valuations for possible outcomes of the mechanism. However, since these valuations are private information, they can be misrepresented for personal advantage. To avoid strategic manipulation, which may harm the social welfare, it is important to encourage truthful participation. In mechanism design, this is achieved through additional payments which offer each player a monetary incentive to reveal his true valuation. Assuming that the mechanism returns an optimal outcome with respect to the reported valuations, the well known Vickrey, Clarke and Groves (VCG) principle~\cite{vickrey61,Clarke71,Groves73} provides a general method to design payments such that each player maximizes his utility if he reports his valuation truthfully. On the other hand, even if the player's valuations are known, optimizing the social welfare is NP-hard for many combinatorial mechanism design problems. Since an exact optimization is intractable under these circumstances, the use of approximation algorithms becomes necessary. Unfortunately, VCG payments are generally not compatible with approximation algorithms.


To preserve truthfulness, so called maximal-in-range (MIR) approximation algorithms must be used~\cite{nisan2000computationally}. This means there must exist a fixed subset of outcomes, such that the approximation algorithm performs optimally with respect to this subset. Given that the players are risk-neutral, the concept of MIR algorithms can be generalized to distributions over outcomes. Together with VCG payments, these maximal-in-distribution-range (MIDR) algorithms allow for the design of randomized approximation mechanisms such that each player maximizes his expected utility if he reveals his true valuation~\cite{Dobzinski09}. This property, which is slightly weaker than truthfulness in a deterministic sense, is also referred to as truthfulness in expectation. 


A well-known method to convert general approximation algorithms which verify an integrality gap of $\alpha$ into MIDR algorithms is the linear programing approach of Lavi and Swamy~\cite{Lavi11}. Conceptually, their method is based on the observation that scaling down a packing polytope by its integrality gap yields a new polytope which is completely contained in the convex hull of the original polytope's integer points. Considering that the social welfare of many combinatorial mechanism design problems can be expressed naturally as an integer program, this scaled polytope corresponds to a set of distributions over the outcomes of the mechanism. Thus, by decomposing a scaled solution of the relaxed linear program into a convex combination of integer solutions, Lavi and Swamy obtain an $\alpha$-approximation mechanism which is MIDR.


Algorithmically, Lavi and Swamy's work builds on a decomposition technique by Carr and Vempala~\cite{carr2000randomized}, which uses a linear program to decompose the scaled relaxed solution. However, since this linear program might have an exponential number of variables, one for every outcome of the mechanism, it can not be solved directly. Instead, Carr and Vempala use the ellipsoid method in combination with an integrality gap verifier to identify a more practical, but still sufficient, subset of outcomes for the decomposition. Although this approach only requires a polynomial number of calls to the integrality gap verifier in theory, the ellipsoid method is notoriously inefficient in practice~\cite{bland1981ellipsoid}. 


In this work, we propose an alternative decomposition technique which does not rely on the ellipsoid method but is general enough to substitute Carr and Vempala's~\cite{carr2000randomized} decomposition technique. The main component of our decomposition technique is an algorithm which computes a convex combination within an arbitrarily small distance $\epsilon$ to the scaled relaxed solution. However, since an exact decomposition is necessary to guarantee truthfulness, we slightly increase the scaling factor of the relaxed solution and apply a post-processing step to match our convex combination with the additionally scaled relaxed solution. Assuming that $\epsilon$ is positive and fixed, our technique yields an $\alpha(1 + \epsilon)$ approximation of the optimal social welfare but uses only a quadratic number of calls to the integrality gap verifier.

\newpage


\section{Setting}\label{sec:Setting}

Integer programming is a powerful tool in combinatorial optimization. Using binary variables to indicate whether certain goods are allocated to a player, the outcomes of various NP-hard mechanism design problems, such as combinatorial auctions or generalized assignment problems~\cite{Lavi11,Dughmi2010a}, can be modeled as integer points of an $n$-dimensional packing polytope $X \subseteq [0,1]^n$.

\begin{definition}\textbf{(Packing Polytope)}
	Polytope $X$ satisfies the packing property if all points $y$ which are dominated by some point $x$ from $X$ are also contained in $X$
	
	\[\forall x, y \in \real_{\geq 0}^n:~ x \in X \wedge x \geq y \Rightarrow y \in X.\]
\end{definition}
Together with a vector $\mu \in \real_{\geq 0}^n$ which denotes the accumulated valuations of the players, it is possible to express the social welfare as an integer program of the form $\max_{x \in \integ(X)} \sum_{k = 1}^n \mu_k x_k$, where $\integ(X)$ denotes the set of integer points in $X$. Clearly, the task of optimizing the social welfare remains NP-hard, regardless of its representation. Nevertheless, an optimal solution $x^* \in X$ for the relaxed linear program $\max_{x \in X} \sum_{k = 1}^n \mu_k x_k$ can be computed in polynomial time.


The maximum ratio between the original program and its relaxation is called the integrality gap of $X$. Assuming this gap is at most $\alpha \in \real_{\geq1}$, Lavi and Swamy~\cite{Lavi11} observe that the scaled fractional solution $\frac{x^*}{\alpha}$ can be decomposed into a convex combination of integer solutions. More formally, there exists a convex combination $\lambda$ from the set $\Lambda = \{\lambda \in \real_{\geq_0}^{\integ(X)} \mid \sum_{x \in \integ(X)} \lambda_x = 1\}$ such that the point $\sigma(\lambda)$, which is defined as $\sigma(\lambda) = \sum_{x \in \integ(X)} \lambda_x x$, is equal to $\frac{x^*}{\alpha}$. Regarding $\lambda$ as a probability distribution over the feasible integer solutions, the MIDR principle allows for the construction of a randomized $\alpha$-approximation mechanism which is truthful in expectation.


From an algorithmic point of view, the decomposition of $\frac{x^*}{\alpha}$ requires to compute several integer points in $X$. Unfortunately, the number of these points might be exponential with respect to $n$, which makes it intractable to consider the entire set $\integ(X)$. However, not all integer points in $\integ(X)$ are necessarily needed for a successful decomposition. For instance, given an approximation algorithm $\mathcal{A} : \real_{\geq 0}^n \rightarrow \integ(X)$ which verifies an integrality gap of $\alpha$, Carr and Vempala~\cite{carr2000randomized} propose a decomposition technique which computes a suitable and sufficient subset of integer points based on a polynomial number of calls to $\mathcal{A}$.

\begin{definition}\textbf{(Integrality Gap Verifier)}
	Approximation algorithm $\mathcal{A}$ verifies an integrality gap of $\alpha$ if the integer solution which is computed by $\mathcal{A}$ is at least $\alpha$ times the optimal relaxed solution for all non-negative vectors $\mu$
	
	\[\forall \mu \in \real_{\geq 0}^n:~ \alpha\sum_{k=1}^n\mu_k\mathcal{A}(\mu)_k \geq \max_{x \in X}\sum_{k=1}^n\mu_k x_k.\]
\end{definition}
In particular, this implies that the number of positive coefficients in the resulting decomposition $\lambda$, which is denoted by $\psi(\lambda) = |\{x \in \integ(X) \mid \lambda_x > 0\}|$, is polynomial as well. Nevertheless, considering that Carr and Vempala's approach strongly relies on the ellipsoid method, it is clear that this decomposition technique is more of theoretical importance than of practical use.


\section{Decomposition with Epsilon Precision}\label{sec:DecompositionwithEpsilonPrecision}

The first part of our decomposition technique is to construct a convex combination $\lambda$ such that the point $\sigma(\lambda)$ is within an arbitrarily small distance $\epsilon \in \real_{> 0}$ to the scaled relaxed solution $\frac{x^*}{\alpha}$. Similar to Carr and Vempala's approach, our technique requires an approximation algorithm $\mathcal{A}' : \real^n \rightarrow \integ(X)$ to sample integer points from $X$. It is important to note that $\mathcal{A}'$ must verify an integrality gap of $\alpha$ for arbitrary vectors $\mu \in \real^n$ whereas $\mathcal{A}$, only accepts non-negative vectors. However, since $X$ satisfies the packing property, it is easy to extend the domain of $\mathcal{A}$ while preserving an approximation ratio of $\alpha$.

\begin{lemma}
\label{ExtVer}
	Approximation algorithm $\mathcal{A}$ can be extended to a new approximation algorithm $\mathcal{A}'$ which verifies an integrality gap of $\alpha$ for arbitrary vectors $\mu$. 
\end{lemma}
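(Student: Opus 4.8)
The plan is to deal with the negative entries of $\mu$ by truncating them to zero before invoking $\mathcal{A}$, and then to repair the integer point returned by $\mathcal{A}$ using the packing property so that it carries no weight on coordinates whose valuation is negative. Concretely, for an arbitrary $\mu \in \real^n$ I would define the nonnegative vector $\mu^+$ by $\mu^+_k = \max\{\mu_k, 0\}$, run $\mathcal{A}$ on $\mu^+$ to get $x = \mathcal{A}(\mu^+) \in \integ(X)$, and then set $\mathcal{A}'(\mu) = x'$, where $x'_k = x_k$ whenever $\mu_k \geq 0$ and $x'_k = 0$ whenever $\mu_k < 0$.

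The one genuinely necessary observation is that $x' \in \integ(X)$, and this is exactly where the packing hypothesis enters: since $X \subseteq [0,1]^n$ we have $x' \geq 0$, and by construction $x' \leq x$ coordinatewise, so the packing property gives $x' \in X$; as $x'$ has integer entries, $x' \in \integ(X)$. Thus $\mathcal{A}'$ is well defined with range $\integ(X)$, and it clearly extends $\mathcal{A}$ in the sense that $\mathcal{A}'(\mu) = \mathcal{A}(\mu)$ already holds (up to this normalization) on nonnegative inputs.

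It then remains to chain three inequalities. First, $\sum_{k=1}^n \mu_k x'_k = \sum_{k=1}^n \mu^+_k x_k$, because the two sums agree term by term: coordinates with $\mu_k < 0$ contribute $0$ on the left (by definition of $x'$) and on the right (by definition of $\mu^+$), coordinates with $\mu_k > 0$ give identical terms, and coordinates with $\mu_k = 0$ contribute nothing to either side. Second, $\alpha \sum_{k=1}^n \mu^+_k x_k \geq \max_{x \in X} \sum_{k=1}^n \mu^+_k x_k$, which is precisely the integrality gap guarantee of $\mathcal{A}$ applied to the nonnegative vector $\mu^+$. Third, $\max_{x \in X} \sum_{k=1}^n \mu^+_k x_k \geq \max_{x \in X} \sum_{k=1}^n \mu_k x_k$, since $\mu^+_k \geq \mu_k$ and every $x \in X$ is nonnegative, so $\mu^+_k x_k \geq \mu_k x_k$ for each $k$ and each feasible $x$. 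Combining the three gives $\alpha \sum_{k=1}^n \mu_k \mathcal{A}'(\mu)_k \geq \max_{x \in X} \sum_{k=1}^n \mu_k x_k$ for all $\mu \in \real^n$, which is the claimed verification.

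I do not anticipate a real obstacle: the entire content is the remark that zeroing out the coordinates with negative valuation cannot decrease the objective $\sum_k \mu_k x_k$ while, thanks to packing, keeping the point inside $\integ(X)$. The only points requiring care are the bookkeeping of the $\mu_k = 0$ case in the term-by-term comparison and making explicit that $X \subseteq [0,1]^n$ supplies the nonnegativity used in the third inequality.
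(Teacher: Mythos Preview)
Your proposal is correct and follows essentially the same approach as the paper: the paper defines the same extension (truncate $\mu$ to its nonnegative part $\xi(\mu)$, apply $\mathcal{A}$, then zero out the output on coordinates with $\mu_k<0$ using packing) and chains the same three (in)equalities. Your writeup is in fact slightly more explicit than the paper in verifying that the modified output still lies in $\integ(X)$ via the packing property.
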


\begin{proof}
	The basic idea of $\mathcal{A}'$ is to replace all negative components of $\mu$ by $0$ and run the original integrality gap verifier $\mathcal{A}$ on the resulting non-negative vector, which is defined as $\xi(\mu)_k = \max(\{\mu_k, 0\})$. Exploiting the fact that $X$ is a packing polytope, the output of $\mathcal{A}$ is then set to $0$ for all negative components of $\mu$. More formally, $\mathcal{A}'$ is defined as
	
	\[\mathcal{A}'(\mu)_k  = \begin{cases}
    			\mathcal{A}(\xi(\mu))_k & \text{if } \mu_k \geq 0 \\
   			0       & \text{if } \mu_k < 0.
		\end{cases}\]
		

	Since $\mathcal{A}'(\mu)_k$ is equal to $0$ if $\mu_k$ is negative and otherwise corresponds to $\mathcal{A}(\xi(\mu))_k$, it holds that
	
	\[\sum_{k=1}^n \mu_k \mathcal{A}'(\mu)_k = \sum_{k=1}^n \xi(\mu)_k \mathcal{A}'(\mu)_k = \sum_{k=1}^n \xi(\mu)_k \mathcal{A}(\xi(\mu))_k.\]
	Furthermore, since $X$ only contains non-negative points, $\max_{x \in X}\sum_{k=1}^n \xi(\mu)_k x_k$ must be greater or equal to $\max_{x \in X}\sum_{k=1}^n \mu_k x_k$. Together with the fact that $\mathcal{A}$ verifies an integrality gap of $\alpha$ for $\xi(\mu)$ this proves that $\mathcal{A}'$ verifies the same integrality gap for $\mu$
	\[\alpha \sum_{k=1}^n \mu_k \mathcal{A}'(\mu)_k = \alpha \sum_{k=1}^n \xi(\mu)_k \mathcal{A}(\xi(\mu))_k \geq \max_{x \in X}\sum_{k=1}^n \xi(\mu)_k x_k \geq \max_{x \in X}\sum_{k=1}^n \mu_k x_k.\]
	\qed
\end{proof}


Once $\mathcal{A}'$ is specified, algorithm~\ref{alg:EpsilonDecomposition} is used to decompose $\frac{x^*}{\alpha}$. Starting at the origin, which can be expressed trivially as a convex combination from $\Lambda$ due to the packing property of $X$, the algorithm gradually improves $\sigma(\lambda^i)$ until it is sufficiently close to $\frac{x^*}{\alpha}$. For each iteration of the algorithm, $\mu^i$ denotes the vector which points from $\sigma(\lambda^i)$ to $\frac{x^*}{\alpha}$. If the length of $\mu^i$ is less or equal to $\epsilon$, then $\sigma(\lambda^i)$ must be within an $\epsilon$-distance to $\frac{x^*}{\alpha}$ and the algorithm terminates. Otherwise, $\mathcal{A}'$ samples a new integer point $x^{i + 1}$ based on the direction of $\mu^i$. It is important to observe that all points on the line segment between $\sigma(\lambda^i)$ and $x^{i + 1}$ can be expressed as a convex combination of the form $\delta \lambda^i + (1 - \delta)\tau(x^{i + 1})$, where $\delta$ is a value between $0$ and $1$ and $\tau(x^{i + 1})$ denotes a convex combination such that the coefficient $\tau(x^{i + 1})_{x^{i + 1}}$ is equal to $1$ while all other coefficients are $0$. Thus, by choosing $\lambda^{i + 1}$ as the convex combination which minimizes the distance between the line segment and $\frac{x^*}{\alpha}$, an improvement of the current decomposition may be possible. In fact, theorem~\ref{thm:IterationBound} states that at most $\lceil n \epsilon^{-2}\rceil -1$ iterations are necessary to achieve the desired precision of $\epsilon$.

\begin{algorithm}
\caption{Decomposition with Epsilon Precision}
\label{alg:EpsilonDecomposition}
	\begin{algorithmic}
		\Require an optimal relaxed solution~$x^*$, an approximation algorithm $\mathcal{A}'$, a precision $\epsilon$
		\Ensure a convex combination $\lambda$ which is within an $\epsilon$-distance to $\frac{x^*}{\alpha}$
		\State $x^0 \gets 0,~ \lambda^0 \gets \tau(x^0),~ \mu^0 \gets \frac{x^*}{\alpha} - \sigma(\lambda^0),~ i \gets 0$
		\While{$\|\mu^i\|_2 > \epsilon$}
			\State $x^{i + 1} \gets \mathcal{A}'(\mu^i)$			
			\State $\delta \gets \operatorname*{arg\,min}_{\delta \in [0,1]} \| \frac{x^*}{\alpha} - (\delta \sigma(\lambda^i) + (1 - \delta) x^{i + 1})\|_2$
			\State $\lambda^{i + 1} \gets \delta \lambda^i + (1 - \delta) \tau(x^{i + 1})$
			\State $\mu^{i + 1} \gets \frac{x^*}{\alpha} - \sigma(\lambda^{i + 1})$
			\State $i \gets i + 1$
		\EndWhile
		\State \Return $\lambda^i$
	\end{algorithmic}
\end{algorithm}

\begin{theorem}
\label{thm:IterationBound}
	Algorithm~\ref{alg:EpsilonDecomposition} returns a convex combination within an $\epsilon$-distance to the scaled relaxed solution $\frac{x^*}{\alpha}$ after at most $\lceil n \epsilon^{-2}\rceil -1$ iterations. 
\end{theorem}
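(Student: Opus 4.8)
The plan is to track the squared distance $a_i := \|\mu^i\|_2^2$ and show that it contracts quickly, in the spirit of a Frank--Wolfe / Gilbert-type analysis. Write $p := \frac{x^*}{\alpha}$ and, for iteration $i$, let $w^i := x^{i+1} - \sigma(\lambda^i)$ be the direction in which the update moves. The core of the argument is the \emph{progress inequality}
\[
a_{i+1} \;\le\; a_i - \frac{a_i^2}{n},
\]
which holds whenever the while-loop executes iteration $i$. To derive it I would first use that $\lambda^{i+1}$ minimizes the distance to $p$ over the whole segment $\{\sigma(\lambda^i) + t\,w^i : t \in [0,1]\}$, so that for \emph{every} admissible $t$,
\[
a_{i+1} \;\le\; \bigl\|p - (\sigma(\lambda^i) + t\,w^i)\bigr\|_2^2 \;=\; a_i - 2t\,\langle \mu^i, w^i\rangle + t^2\,\|w^i\|_2^2 .
\]
Two estimates then feed in. First, since $\mathcal{A}'$ verifies an integrality gap of $\alpha$ for the vector $\mu^i$ (Lemma~\ref{ExtVer}) and $x^* \in X$, we get $\alpha\,\langle \mu^i, x^{i+1}\rangle \ge \max_{x \in X} \langle \mu^i, x\rangle \ge \langle \mu^i, x^*\rangle = \alpha\,\langle \mu^i, p\rangle$; dividing by $\alpha$ and adding $\langle\mu^i, p - \sigma(\lambda^i)\rangle = \|\mu^i\|_2^2$ yields $\langle \mu^i, w^i\rangle \ge a_i$. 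Second, both $x^{i+1}$ and $\sigma(\lambda^i)$ lie in $X \subseteq [0,1]^n$, so every coordinate of $w^i$ lies in $[-1,1]$ and $\|w^i\|_2^2 \le n$. Substituting the feasible step length $t = a_i/n \in [0,1]$ and applying these two bounds gives the progress inequality.

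The second step is to solve the recurrence. Put $b_i := a_i/n \in [0,1]$, so that the progress inequality reads $b_{i+1} \le b_i(1-b_i)$, and prove $b_i \le \frac{1}{i+1}$ for all $i$ by induction. The base case $b_0 \le 1$ holds because $p = x^*/\alpha \in [0,1]^n$ (using $\alpha \ge 1$ and $x^*\in X$). For the step, $x \mapsto x(1-x)$ is nondecreasing on $[0,\tfrac12]$ and $\tfrac{1}{i+1} \le \tfrac12$ for $i \ge 1$, so $b_{i+1} \le \tfrac{1}{i+1}\bigl(1 - \tfrac{1}{i+1}\bigr) = \tfrac{i}{(i+1)^2} \le \tfrac{1}{i+2}$; the case $i = 0$ is handled separately via $b_1 \le \max_{x\in[0,1]} x(1-x) = \tfrac14 \le \tfrac12$. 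Hence $a_i \le \frac{n}{i+1}$.

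Finally I would assemble the pieces. Since $a_i \le \frac{n}{i+1} \to 0$, the loop terminates, and on exit $\|\mu^i\|_2 \le \epsilon$, i.e.\ $\sigma(\lambda^i)$ lies within $\epsilon$ of $\frac{x^*}{\alpha}$ as claimed (and $\lambda^i \in \Lambda$ throughout, since $\lambda^0 = \tau(0)$ and each update is a convex combination of members of $\Lambda$). For the iteration count: if iteration $i$ is actually executed, the loop guard forces $a_i > \epsilon^2$, hence $\frac{n}{i+1} > \epsilon^2$, i.e.\ $i < n\epsilon^{-2} - 1$; as $i$ is an integer this gives $i \le \lceil n\epsilon^{-2}\rceil - 2$, so at most $\lceil n\epsilon^{-2}\rceil - 1$ iterations are executed.

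The only genuinely non-routine point — the main obstacle — is the progress inequality with the exact constant $n$ in the denominator: one must correctly pair the integrality-gap lower bound $\langle \mu^i, w^i\rangle \ge a_i$ with the diameter bound $\|w^i\|_2^2 \le n$ and choose a feasible step length; everything afterwards is a textbook recurrence estimate. Minor care is also needed with degenerate cases (e.g.\ $a_i = n$, or $n \le \epsilon^2$ so the stated bound is $0$), but these only make the loop stop sooner and stay consistent with the claim.
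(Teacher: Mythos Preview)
Your proof is correct and reaches the same bound $\|\mu^i\|_2^2 \le \frac{n}{i+1}$ as the paper, but the route is genuinely different. The paper argues geometrically: it shows that the hyperplane through $\frac{x^*}{\alpha}$ orthogonal to $\mu^i$ separates $\sigma(\lambda^i)$ from $x^{i+1}$, takes the intersection point $z^{i+1}$ of the segment with this hyperplane, and uses the altitude formula in the right triangle $\big(\frac{x^*}{\alpha},\sigma(\lambda^i),z^{i+1}\big)$ together with the diameter bound $\big\|\frac{x^*}{\alpha}-z^{i+1}\big\|_2^2 \le n$ to obtain the recurrence $\|\mu^{i+1}\|_2^2 \le \frac{n\,\|\mu^i\|_2^2}{n+\|\mu^i\|_2^2}$, which telescopes as $\frac{n}{\|\mu^i\|_2^2} \ge \frac{n}{\|\mu^0\|_2^2}+i$. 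Your argument is the standard Frank--Wolfe/Gilbert analysis: bound the quadratic $\|p-\sigma(\lambda^i)-t\,w^i\|_2^2$ at a feasible step $t=a_i/n$ using $\langle\mu^i,w^i\rangle\ge a_i$ (from the gap verifier) and $\|w^i\|_2^2\le n$ (from $X\subseteq[0,1]^n$), yielding $a_{i+1}\le a_i-\frac{a_i^2}{n}$. In fact your one-step recurrence is slightly tighter than the paper's (since $b(1-b)<\frac{b}{1+b}$ for $b>0$), though both collapse to the same bound $a_i\le\frac{n}{i+1}$. The paper's approach buys a self-contained geometric picture (Figures~\ref{fig:triangle} and~\ref{fig:convergence}); yours buys a shorter, purely analytic derivation that readers familiar with Frank--Wolfe will recognize immediately.
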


\begin{proof}
	Clearly, algorithm~\ref{alg:EpsilonDecomposition} terminates if and only if the distance between $\sigma(\lambda^i)$ and $\frac{x^*}{\alpha}$ becomes less or equal to $\epsilon$. Thus, suppose the length of vector $\mu^i$ is still greater than $\epsilon$. Consequently, approximation algorithm $\mathcal{A}'$ is deployed to sample a new integer point $x^{i + 1}$. Keeping in mind that $\mathcal{A}'$ verifies an integrality gap of $\alpha$, the value of $x^{i + 1}$ must be greater or equal to the value of $\frac{x^*}{\alpha}$ with respect to vector $\mu^i$ 
	
	\[\sum_{k=1}^n \mu_k^i x_k^{i + 1} = \sum_{k=1}^n \mu_k^i \mathcal{A}'(\mu^i)_k \geq \max_{x \in X}\sum_{k=1}^n \mu_k^i \frac{x_k}{\alpha} \geq \sum_{k=1}^n \mu_k^i \frac{x^*}{\alpha}.\]
Conversely, since the squared distance between $\sigma(\lambda^i)$ and $\frac{x^*}{\alpha}$ is greater than $\epsilon^2$, and therefore also greater than $0$, it holds that the value of $\sigma(\mu^i)$ is less than the value of $\frac{x^*}{\alpha}$ with respect to vector $\mu^i$

	\begin{align*}
						& & 0	& < \sum_{k=1}^n\Big(\frac{x_k^*}{\alpha} - \sigma(\lambda^i)_k\Big)^2\\
		\Longleftrightarrow	& & 0	& < \sum_{k=1}^n\Big(\Big(\frac{x_k^*}{\alpha}\Big)^2 - 2\frac{x_k^*}{\alpha}\sigma(\lambda^i)_k + \sigma(\lambda^i)_k^2\Big)\\
		\Longleftrightarrow	& & \sum_{k=1}^n\Big(\frac{x_k^*}{\alpha}\sigma(\lambda^i)_k - \sigma(\lambda^i)_k^2\Big) & < \sum_{k=1}^n\Big(\Big(\frac{x_k^*}{\alpha}\Big)^2 - \frac{x_k^*}{\alpha}\sigma(\lambda^i)_k\Big) \\
		\Longleftrightarrow	& & \sum_{k=1}^n \mu_k^i \sigma(\lambda^i)_k & < \sum_{k=1}^n\mu_k^i \frac{x_k^*}{\alpha}.
	\end{align*}
As a result, the hyper plane $\{x \in \real^n \mid \sum_{k=1}^n \mu_k^i x_k = \sum_{k=1}^n \mu_k^i \frac{x_k^*}{\alpha}\}$ separates $\sigma(\lambda^i)$ from $x^{i + 1}$, which in turn implies that the line segment $\conv(\{\sigma(\lambda^i), x^{i + 1}\})$ intersects the hyperplane at a unique point $z^{i + 1}$.

	
	\begin{figure}[t]
		\begin{center}
			\begin{pspicture}(-5.25, -3.25)(5.25, 1.75)
				\psset{xunit=1cm, yunit=1cm, runit=1cm}
				\psset{arrowsize=0.15, arrowinset=0.0,arrowlength=1.5}
 	 
				\pnode(-5.25,0){hl}
				\pnode(5.25,0){hr}
				\cnode*(4,0){.1}{fs}
				\uput{.25}[45](4,0){$\frac{x^*}{\alpha}$}
				\cnode*(4,-3){.1}{l}
				\uput{.25}[-135](4,-3){$\sigma(\lambda^i)$}
				\cnode*(2.56,-1.92){.1}{nl}
				\uput{.25}[-135](2.56,-1.92){$\sigma(\lambda^{i + 1})$}
				\cnode*(0,0){.1}{z}
				\uput{.25}[-135](0,0){$z^{i + 1}$}
				\cnode*(-2,1.5){.1}{x}
				\uput{.25}[-135](-2,1.5){$x^{i + 1}$}

				\uput{.0}[135](3.28,-0.96){$\mu^{i+1}$}
				\uput{.1}[0](4,-1.5){$\mu^{i}$}
 
				\ncline[linestyle=dashed]{hl}{z}
				\ncline[linestyle=dashed]{fs}{hr}
				\uput{.15}[45](-5.25,0){hyperplane}

				\ncline{->}{l}{fs}
				\ncline{->}{nl}{fs}

				\ncline{l}{z}
				\ncline[linestyle=dashed]{z}{x}

				\ncline{z}{fs}

				\pstRightAngle{z}{fs}{l}
				\pstRightAngle{z}{nl}{fs}
			\end{pspicture}
		\end{center}
		\caption{Right triangle between the points $\frac{x^*}{\alpha}$, $\sigma(\lambda_x^i)$ and $z^{i + 1}$}
		\label{fig:triangle}
	\end{figure}
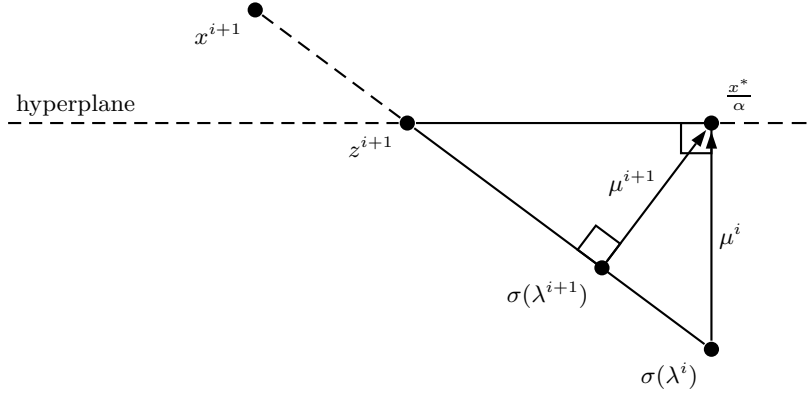
	
	Since the hyperplane is orthogonal to $\mu^i$, the points $\frac{x^*}{\alpha}$, $\sigma(\lambda_x^i)$ and $z^{i + 1}$ form a right triangle, as figure \ref{fig:triangle} illustrates. Furthermore, the altitude of this triangle minimizes the distance from the line segment $\conv(\{\sigma(\lambda^i), x^{i+1}\})$ to $\frac{x^*}{\alpha}$ and therefore corresponds to the length of new vector $\mu^{i + 1}$. According to the basic relations between the sides in a right triangle, the length of $\mu^{i + 1}$ can be expressed as
	
	\[\big\|\mu^{i + 1}\big\|_2 = \sqrt{\frac{\big\|\mu^i\big\|_2^2 \big\|\frac{x^*}{\alpha} - z^{i+1}\big\|_2^2}{\big\|\mu^i\big\|_2^2 + \big\|\frac{x^*}{\alpha} - z^{i+1}\big\|_2^2}}.\]
	
	
	Unfortunately, the exact position of $z^{i + 1}$, depends on the implementation $\mathcal{A}'$. To obtain an upper bound on the length $\mu^{i + 1}$ which does not rely on $z^{i + 1}$, it is helpful to observe that the altitude of the triangle grows as the distance between $z^{i + 1}$ and $\frac{x^*}{\alpha}$ increases. However, since both points are contained in the standard hyper cube $[0,1]^n$, the square of this distance is at most $n$
	
	\[\big\|\frac{x^*}{\alpha} - z^{i+1}\big\|_2^2 = \sum_{k=1}^n\Big(\frac{x_k^*}{\alpha} - z_k^{i+1}\Big)^2 \leq \sum_{k=1}^n1 = n,\]
	which means that the maximum length of $\mu^{i + 1}$ is given by
	
	\begin{align*}
						& & \big\|\frac{x^*}{\alpha} - z^{i+1}\big\|_2^2 																			& \leq n\\
		\Longleftrightarrow	& & \frac{\big\|\frac{x^*}{\alpha} - z^{i+1}\big\|_2^2}{\big\|\mu^i\big\|_2^2 + \big\|\frac{x^*}{\alpha} - z^{i+1}\big\|_2^2}						& \leq \frac{n}{\big\|\mu^i\big\|_2^2 + n}\\
		\Longleftrightarrow	& & \frac{\big\|\mu^i\big\|_2^2\big\|\frac{x^*}{\alpha} - z^{i+1}\big\|_2^2}{\big\|\mu^i\big\|_2^2 + \big\|\frac{x^*}{\alpha} - z^{i+1}\big\|_2^2}		& \leq \frac{\big\|\mu^i\big\|_2^2n}{\big\|\mu^i\big\|_2^2 + n}\\
		\Longleftrightarrow	& & \sqrt{\frac{\big\|\mu^i\big\|_2^2\big\|\frac{x^*}{\alpha} - z^{i+1}\big\|_2^2}{\big\|\mu^i\big\|_2^2 + \big\|\frac{x^*}{\alpha} - z^{i+1}\big\|_2^2}}	& \leq \sqrt{\frac{\big\|\mu^i\big\|_2^2n}{\big\|\mu^i\big\|_2^2 + n}}.\\
	\end{align*}
	
	
	\begin{figure}[t]
		\begin{center}
			\begin{pspicture}(-4.50, -4.50)(4.50, 4.50)
				\psset{xunit=0.85cm, yunit=0.85cm, runit=0.85cm}
				\psset{arrowsize=0.15, arrowinset=0.0,arrowlength=1.5}
	 
		 		\pscircle[linestyle=dashed](0,0){5}
		 
				\cnode*(0,0){.1}{fs}
				\uput{.25}[-45](0,0){$\frac{x^*}{\alpha}$}
			
				\cnode*(0, -5){.1}{l0}
				\uput{.25}[-90](0, -5){$\sigma(\lambda^0)$}
				\cnode*(-2.5, -2.5){.1}{l1}
				\uput{.25}[-135](-2.5, -2.5){$\sigma(\lambda^1)$}
				\cnode*(-2.845177969, -0.488155365){.1}{l2}
				\uput{.25}[-170.265](-2.845177969, -0.488155365){$\sigma(\lambda^2)$}
				\cnode*(-2.345260950, 0.865881676){.1}{l3}
				\uput{.25}[159.736](-2.345260950, 0.865881676){$\sigma(\lambda^3)$}
				\cnode*(-1.529856090, 1.630809720){.1}{l4}
				\uput{.25}[133.171](-1.529856090, 1.630809720){$\sigma(\lambda^4)$}
				\cnode*(-0.667113176, 1.929151802){.1}{l5}
				\uput{.25}[109.076](-0.667113176, 1.929151802){$\sigma(\lambda^5)$}
				\cnode*(0.103251213, 1.886999670){.1}{l6}
				\uput{.25}[86.8681](0.103251213, 1.886999670){$\sigma(\lambda^6)$}
				\cnode*(0.714411293, 1.616977582){.1}{l7}
				\uput{.25}[66.1633](0.714411293, 1.616977582){$\sigma(\lambda^7)$}
			
				\cnode*(-5, 0){.1}{z1}
				\uput{.25}[180](-5 ,0){$z^1$}
				\cnode*(-3.535533906, 3.535533906){.1}{z2}
				\uput{.25}[135](-3.535533906, 3.535533906){$z^2$}
				\cnode*(-0.845509894, 4.927992798){.1}{z3}
				\uput{.25}[99.7356](-0.845509894, 4.927992798){$z^3$}
				\cnode*(1.731763352, 4.690521900){.1}{z4}
				\uput{.25}[69.7356](1.731763352, 4.690521900){$z^4$}
				\cnode*(3.646601393, 3.420862213){.1}{z5}
				\uput{.25}[43.1706](3.646601393, 3.420862213){$z^5$}
				\cnode*(4.725437552, 1.634086882){.1}{z6}
				\uput{.25}[19.0757](4.725437552, 1.634086882){$z^6$}
				\cnode*(4.992531854, -0.273177032){.1}{z7}
				\uput{.25}[-3.13194](4.992531854, -0.273177032){$z^7$}

				\ncline{fs}{l0}
				\naput{$\sqrt{n}$}
				\ncline{l0}{l1}			
				\ncline[linestyle=dashed]{l1}{z1}
				\ncline{fs}{l1}
				\ncline{l1}{l2}
				\ncline[linestyle=dashed]{l2}{z2}
    				\ncline{fs}{l2}
				\ncline{l2}{l3}
				\ncline[linestyle=dashed]{l3}{z3}
				\ncline{fs}{l3}
				\ncline{l3}{l4}
				\ncline[linestyle=dashed]{l4}{z4}
				\ncline{fs}{l4}
				\ncline{l4}{l5}
				\ncline[linestyle=dashed]{l5}{z5}
				\ncline{fs}{l5}
				\ncline{l5}{l6}
				\ncline[linestyle=dashed]{l6}{z6}
				\ncline{fs}{l6}
				\ncline{l6}{l7}
				\ncline[linestyle=dashed]{l7}{z7}
				\ncline{fs}{l7}
			\end{pspicture}
		\end{center}
		\caption{Upper bound on the distance between $\sigma(\lambda^i)$ and $\frac{x^*}{\alpha}$ for the first $7$ iterations}
		\label{fig:convergence}
	\end{figure}
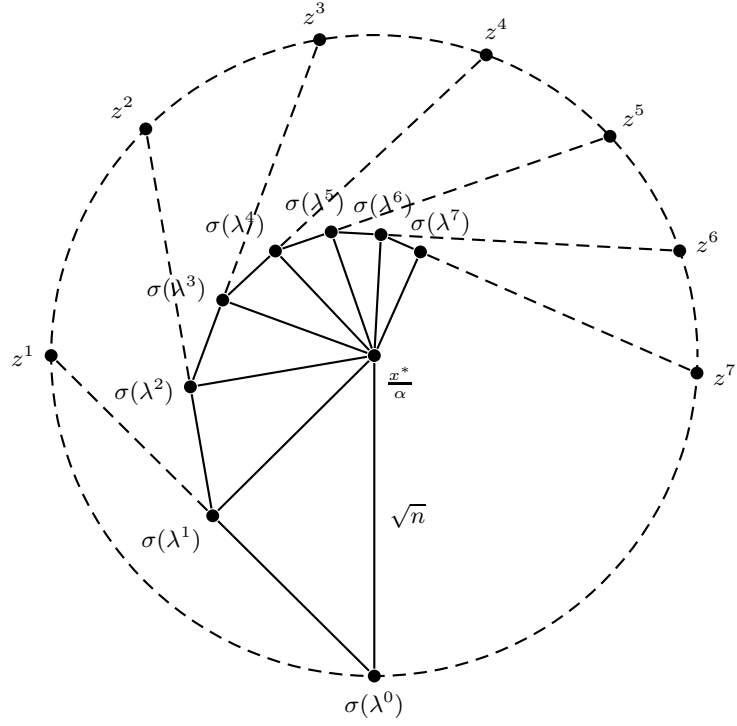
	
	It is important to note that this upper bound on the length of $\mu^{i + 1}$, which is illustrated in figure~\ref{fig:convergence}, only depends on the previous vector $\mu^i$ and the number of dimensions $n$. Solving the recurrence inequality yields yet another upper bound which is based on the initial vector $\mu^0$ and the number of iterations $i$
	
	\begin{align*}
						& & \big\|\mu^i\big\|_2^2 & \leq \frac{\big\|\mu^{i-1}\big\|_2^2 n}{\big\|\mu^{i-1}\big\|_2^2 + n}\\
	 	\Longleftrightarrow	& & \frac{\big\|\mu^i\big\|_2^2}{n} & \leq \frac{\big\|\mu^{i-1}\big\|_2^2}{\big\|\mu^{i-1}\big\|_2^2 + n}\\
	 	\Longleftrightarrow	& & \frac{n}{\big\|\mu^i\big\|_2^2} & \geq \frac{n}{\big\|\mu^{i -1}\big\|_2^2} + 1~...\\
		\Longrightarrow	& & \frac{n}{\big\|\mu^i\big\|_2^2} & \geq \frac{n}{\big\|\mu^0\big\|_2^2} + i\\
		\Longleftrightarrow	& & \frac{\big\|\mu^i\big\|_2^2}{n} & \leq \frac{\big\|\mu^0\big\|_2^2}{\big\|\mu^0\big\|_2^2i + n}\\
		\Longleftrightarrow	& & \big\|\mu^i\big\|_2 & \leq \sqrt{\frac{\big\|\mu^{0}\big\|_2^2 n}{\big\|\mu^{0}\big\|_2^2i + n}}.\\
	\end{align*}
	

	Considering that the squared length of vector $\mu^0$, which corresponds to the distance between $\frac{x^*}{\alpha}$ and the origin, is at most $n$
	
	\[\big\|\mu^0\big\|_2^2 = \sum_{k=1}^n \Big(\frac{x_k^*}{\alpha}\Big)^2 \leq \sum_{k=1}^n1 = n,\]
	it follows that
	
	\[\big\|\mu^i\big\|_2 \leq \sqrt{\frac{\big\|\mu^{0}\big\|_2^2 n}{\big\|\mu^{0}\big\|_2^2i + n}} \leq \sqrt{\frac{n^2}{ni + n}} = \sqrt{\frac{n}{i + 1}}.\]
	Finally, this proves that the distance between $\sigma(\lambda^i)$ and $\frac{x^*}{\alpha}$ must be less or equal to $\epsilon$ after not more than $\lceil n \epsilon^{-2}\rceil -1$ iterations, at which point the algorithm terminates
	
	\[\big\|\mu^{\lceil n \epsilon^{-2}\rceil-1}\big\|_2 \leq \sqrt{\frac{n}{1 + (\lceil n \epsilon^{-2}\rceil-1)}} \leq \epsilon.\]
	\qed
\end{proof}


It should be mentioned that according to theorem~\ref{thm:IterationBound}, the total number of iterations is linear with respect to $n$. Since each of these iterations adds at most one additional point to $\lambda$, the number of positive coefficients $\psi(\lambda)$ must be linear as well. Considering that the decomposition of a fractional point in $n$-dimensional space requires up to $n+1$ affinely independent integer points, it follows that for any fixed $\epsilon$ the performance of our decomposition algorithm is asymptotically optimal.


\section{Exact Decomposition}\label{sec:ExactDecomposition}

Although the convex combination $\lambda$ which is returned by algorithm \ref{alg:EpsilonDecomposition} is within an $\epsilon$-distance to $\frac{x^*}{\alpha}$, an exact decomposition of the relaxed solution is necessary to guarantee truthfulness. Assuming that an additional scaling factor of $\sqrt{n}\epsilon$ is admissible, the second part of our decomposition technique shows how to convert $\lambda$ into a new convex combination $\lambda''$ such that $\sigma(\lambda'')$ is equal to $\frac{x^*}{\alpha(1+\sqrt{n}\epsilon)}$. It is important to note that this additional scaling factor depends on $\epsilon$, which implies that it can still be made arbitrarily small. In particular, running algorithm \ref{alg:EpsilonDecomposition} with a precision of $\frac{\epsilon}{\sqrt{n}}$, instead of $\epsilon$, reduces the factor to $\epsilon$ and yields a decomposition which is equal to~$\frac{x^*}{\alpha(1+\epsilon)}$. However, since this new precision is not independent of $n$ anymore, the maximum iteration number is increased to $\lceil n (\frac{\epsilon}{\sqrt{n}})^{-2}\rceil - 1$, which is quadratic in $n$.


To adjust $\sigma(\lambda)$ component-wisely, it is helpful to consider the integer points $e^k \in \{0,1\}^n$. For every dimension $k$, the $k$th component of $e^k$ is defined to be $1$ while all other components are $0$. Since $X$ has a finite integrality gap and also satisfies the packing property, all points $e^k$ must be contained in $X$.

\begin{lemma}
\label{thm:StandardBasisPoints}
	The polytope $X$ contains all points $e^k$.
\end{lemma}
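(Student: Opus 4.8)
The plan is to probe the integrality gap with one carefully chosen objective per coordinate. Fix $k \in \{1, \dots, n\}$ and take the standard basis vector $e^k$ itself as the valuation vector, so that $\max_{x \in X} \sum_{j=1}^n e^k_j x_j = \max_{x \in X} x_k$. As long as coordinate $k$ is not identically zero on $X$ — which holds in the mechanism design applications, since otherwise item $k$ could never be allocated — this maximum is strictly positive. By the packing property $0 \in X$, so $\max_{x \in \integ(X)} x_k$ is well defined, and finiteness of the integrality gap rules out the ratio $\frac{\max_{x \in X} x_k}{\max_{x \in \integ(X)} x_k}$ being infinite; hence $\max_{x \in \integ(X)} x_k > 0$ as well.

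Next I would use that $X \subseteq [0,1]^n$. Pick $x \in \integ(X)$ with $x_k > 0$. Being an integer point of the unit cube, each component of $x$ is $0$ or $1$, so $x_k = 1$. Then $x$ dominates $e^k$ componentwise — the $k$-th entries agree and every other entry of $e^k$ is $0 \le x_j$ — so the packing property applied to $x \in X$ and $x \ge e^k$ gives $e^k \in X$. Since $k$ was arbitrary, this covers all $n$ basis points.

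The one genuinely delicate point is the strict positivity of $\max_{x \in X} x_k$: without it the statement fails outright (take $X = \{0\}$), so I would either record it as a standing non-degeneracy assumption on packing polytopes — harmless, since a coordinate that vanishes on all of $X$ carries no information and can be projected away — or observe that it is precisely the condition under which ``finite integrality gap'' is a meaningful hypothesis in the first place. Everything past that is a one-line argument combining the integer-cube constraint with the packing axiom, with no real computation involved.
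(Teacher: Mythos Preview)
Your argument is correct and mirrors the paper's proof, just run directly rather than by contradiction. The one point you flag as delicate --- strict positivity of $\max_{x\in X} x_k$ --- is handled in the paper by the standing hypothesis that $X$ is $n$-dimensional: if $x_k$ vanished identically on $X$, the polytope would lie in the hyperplane $\{x_k=0\}$ and fail to be full-dimensional.
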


\begin{proof}
	For the sake of contradiction, assume there exists a dimension $k$ for which $e^k$ is not contained in $X$. Since $X$ satisfies the packing property, this implies that there exists no point in $X$ whose $k$th component is $1$, in particular no integer point. As a result, the optimal solution for the integer program with respect to the vector $e^k$ must be $0$
	
	\[\max_{x \in \integ(X)}  \sum_l^n {e_l^k} x_l = \max_{x \in \integ(X)} x_k = 0.\]
	Keeping in mind that $X$ has an integrality gap of at most $\alpha$, it immediately follows that the optimal solution for the relaxed linear program with respect to $e^k$ must also be $0$
	
	\[\max_{x \in X}  \sum_l^n {e_l^k} x_l = \max_{x \in X} x_k = 0.\]
	However, this implies that the $k$th component of every point in $X$ is $0$, which contradicts the fact that $X$ is $n$-dimensional.
	\qed
\end{proof}
Applying theorem~\ref{thm:DominatingConvexCombination}, our decomposition technique uses the points $e^k$ to construct an intermediate convex combination $\lambda'$ such that $\sigma(\lambda')$ dominates $\frac{x^*}{\alpha(1+\sqrt{n}\epsilon)}$.

\begin{theorem}
\label{thm:DominatingConvexCombination}
	Convex combination $\lambda$ can be converted into a new convex combination $\lambda'$ which dominates $\frac{x^*}{\alpha(1+\sqrt{n}\epsilon)}$.
\end{theorem}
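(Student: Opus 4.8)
The plan is to repair the residual error of $\lambda$ by mixing it with a cheap convex combination supported on the standard basis points $e^k$ (which lie in $\integ(X)$ by Lemma~\ref{thm:StandardBasisPoints}) and the origin (which lies in $\integ(X)$ by the packing property). Write $s = \sqrt{n}\,\epsilon$ and let $\mu = \frac{x^*}{\alpha} - \sigma(\lambda)$ be the residual vector, so that $\|\mu\|_2 \leq \epsilon$ by Theorem~\ref{thm:IterationBound}. The crucial first step is to convert this Euclidean bound into an $\ell_1$ bound by Cauchy--Schwarz, $\|\mu\|_1 \leq \sqrt{n}\,\|\mu\|_2 \leq \sqrt{n}\,\epsilon = s$, so that the non-negative vector $v$ defined component-wise by $v_k = \frac{1}{s}\max(\{\mu_k, 0\})$ satisfies $\sum_{k=1}^n v_k \leq \frac{1}{s}\|\mu\|_1 \leq 1$.

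Next I would set $\nu = \big(1 - \sum_{k=1}^n v_k\big)\,\tau(0) + \sum_{k=1}^n v_k\,\tau(e^k)$, which is a well-defined member of $\Lambda$ since its coefficients are non-negative and sum to $1$ and every point $0, e^1, \dots, e^n$ belongs to $\integ(X)$. By construction $\sigma(\nu)_k = v_k \geq \frac{\mu_k}{s}$ for each $k$, i.e.\ $s\,\sigma(\nu) \geq \mu$ component-wise. Then define $\lambda' = \frac{1}{1+s}\lambda + \frac{s}{1+s}\nu$, which is again a convex combination from $\Lambda$, and verify domination coordinate by coordinate:
\[\sigma(\lambda')_k = \frac{\sigma(\lambda)_k + s\,\sigma(\nu)_k}{1+s} \geq \frac{\sigma(\lambda)_k + \mu_k}{1+s} = \frac{1}{1+s}\cdot\frac{x_k^*}{\alpha} = \frac{x_k^*}{\alpha(1+\sqrt{n}\,\epsilon)},\]
which is exactly the claim. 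As a by-product $\psi(\lambda') \leq \psi(\lambda) + n + 1$, so the number of positive coefficients remains linear in $n$.

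The two convexity checks and the coordinate-wise inequality are routine; the one step carrying actual content is the passage from $\|\mu\|_2 \leq \epsilon$ to $\sum_k v_k \leq 1$, which amounts to observing that a precision of $\epsilon$ in the $\ell_2$ metric leaves a total deficiency of at most $\sqrt{n}\,\epsilon$ to be filled in, and this is precisely why the extra scaling factor $\sqrt{n}\,\epsilon$, rather than merely $\epsilon$, appears in the statement. I do not expect a genuine obstacle beyond keeping track of the fact that $\mu_k$ may be negative in some coordinates; there $\sigma(\lambda)_k$ already exceeds $\frac{x_k^*}{\alpha(1+s)}$, so clamping $\mu_k$ to $0$ in the definition of $v$ (and hence choosing $v_k = 0$ there) costs nothing, and the displayed inequality above still goes through since $s\,\sigma(\nu)_k = 0 \geq \mu_k$.
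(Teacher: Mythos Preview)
Your proof is correct and follows essentially the same construction as the paper: mix $\lambda$ with the standard basis points $e^k$ and the origin, bound the extra mass via $\|\mu\|_1 \leq \sqrt{n}\,\|\mu\|_2 \leq \sqrt{n}\,\epsilon$ (the paper phrases this as the H\"older inequality), and normalize by $1+\sqrt{n}\,\epsilon$. The only cosmetic difference is that you weight $e^k$ by the positive part $\max(\mu_k,0)$ whereas the paper uses $|\mu_k|$, which is slightly more wasteful but equally valid.
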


\begin{proof}
	According to lemma~\ref{thm:StandardBasisPoints}, the points $e^k$ are contained in $\integ(X)$. Thus, they can be added to $\lambda$ to construct a positive combination $\lambda + \sum_{k=1}^n |\frac{x_k^*}{\alpha} - \sigma(\lambda)_k| \tau(e^k)$ which dominates $\frac{x^*}{\alpha}$
	
	\begin{eqnarray*}
		\sigma\Big(\lambda + \sum_{k=1}^n\Big| \frac{x_k^*}{\alpha} - \sigma(\lambda)_k\Big| \tau(e^k)\Big)	& = & \sigma(\lambda) + \Big(\sum_{k=1}^n\Big| \frac{x_k^*}{\alpha} - \sigma(\lambda)_k\Big| e^k\Big)\\
																					& \geq & \sigma(\lambda) + \Big(\sum_{k=1}^n \Big(\frac{x_k^*}{\alpha} - \sigma(\lambda)_k\Big) e^k\Big)\\
																					& = & \sigma(\lambda) + \frac{x^*}{\alpha} - \sigma(\lambda)\\
																					& = & \frac{x^*}{\alpha}.
	\end{eqnarray*}
	
	
	Since the sum over the additional coefficients $\sum_{k=1}^n|\frac{x_k^*}{\alpha} - \sigma(\lambda)_k|$ is equivalent to the L1 distance between $\sigma(\lambda)$ and $\frac{x^*}{\alpha}$, it can be upper bounded by the H\"older inequality
		
	\[\sum_{k=1}^n\Big|\frac{x_k^*}{\alpha} - \sigma(\lambda)_k\Big| = \Big\|\frac{x^*}{\alpha} - \sigma(\lambda)\Big\|_1 \leq \Big\|1\Big\|_2 \Big\|\frac{x^*}{\alpha} - \sigma(\lambda)\Big\|_2 \leq \sqrt{n} \epsilon.\]
	As a result, scaling down the positive combination by a factor of $1 + \sqrt{n} \epsilon$ yields a new positive combination which dominates $\frac{x^*}{\alpha(1+\sqrt{n}\epsilon)}$ and whose coefficients sum up to a value less or equal to $1$. To ensure that this sum becomes exactly $1$, the coefficients must be increased by an additional value of $\sqrt{n}\epsilon - \sum_{k=1}^n|\frac{x_k^*}{\alpha} - \sigma(\lambda)_k|$. An easy way to achieve this is by adding the origin, which is trivially contained in $\integ(X)$ due to the packing property of $X$, to the positive combination. Thus, the desired convex combination $\lambda'$ corresponds to
	
	\[\frac{\lambda + \sum_{k=1}^n\big| \frac{x_k^*}{\alpha} - \sigma(\lambda)_k\big| \tau(e^k) + \big(\sqrt{n}\epsilon - \sum_{k=1}^n\big|\frac{x_k^*}{\alpha} - \sigma(\lambda)_k\big|\big) \tau(0)}{1 + \sqrt{n}\epsilon}.\]
	\qed
\end{proof}


In the final step, our decomposition technique exploits the packing property of $X$ to convert $\lambda'$ into an exact decomposition of $\frac{x^*}{\alpha(1+\sqrt{n}\epsilon)}$. A simple but general approach to this problem is provided by algorithm~\ref{alg:DominatingToExactDecomposition}. Given a point $x \in X$ which is dominated by $\sigma(\lambda')$, the basic idea of the algorithm is to iteratively weaken the integer points which comprise $\lambda'$ until the desired convex combination $\lambda''$ is reached. As theorem~\ref{thm:DominatingToExact} shows, this computation requires at most $|\psi(\lambda)|n + \frac{n^2+n}{2}$ iterations.

\begin{algorithm}
\caption{From a Dominating to an Exact Decomposition}
\label{alg:DominatingToExactDecomposition}
	\begin{algorithmic}
		\Require a convex combination $\lambda'$,  a point $x$ which is dominated by $\sigma(\lambda')$
		\Ensure a convex combination $\lambda''$ which is an exact decomposition of $x$
		\State $\lambda^0 \gets \lambda',~ i \gets 0$
		\ForAll{$1 \leq k \leq n$}
			\While{$\sigma(\lambda^i)_k > x_k$}
				\State $y \gets \text{ pick some $y$ from $\integ(X)$ such that } \lambda_y^i > 0 \text{ and } y_k = 1$
				\If {$\lambda^i_y \geq \sigma(\lambda^i)_k - x_k$}
					\State $\lambda^{i + 1} \gets \lambda^i - (\sigma(\lambda^i)_k - x_k) \tau(y) + (\sigma(\lambda^i)_k - x_k) \tau(y - e^k)$
				\Else
					\State $\lambda^{i + 1} \gets \lambda^i - \lambda^i_y \tau(y) + \lambda^i_y \tau(y - e^k)$
				\EndIf
				\State $i \gets i + 1$
			\EndWhile
		\EndFor
		\State \Return $\lambda^i$
	\end{algorithmic}
\end{algorithm}

\begin{theorem}
\label{thm:DominatingToExact}
	Assuming that $\sigma(\lambda')$ dominates the point $x$, algorithm~\ref{alg:DominatingToExactDecomposition} converts $\lambda'$ into a new convex combination $\lambda''$ such that $\sigma(\lambda'')$ is equal to $x$. Furthermore, the required number of iterations is at most $|\psi(\lambda')|n + \frac{n^2+n}{2}$.
\end{theorem}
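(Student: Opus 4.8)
The plan is to verify three things in turn: that Algorithm~\ref{alg:DominatingToExactDecomposition} is well defined, that its output is a convex combination equal to $x$, and that the stated iteration bound holds. I expect the last part to be the real obstacle, since a careless estimate lets the support of $\lambda^i$ grow geometrically.

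\emph{Well-definedness and membership in $\Lambda$.} Whenever the inner while loop runs in pass $k$, the condition $\sigma(\lambda^i)_k > x_k \geq 0$ holds because $x \in X \subseteq [0,1]^n$, and since $\sigma(\lambda^i)_k = \sum_y \lambda^i_y y_k$ with every $y_k \in \{0,1\}$, some $y$ with $\lambda^i_y > 0$ and $y_k = 1$ must exist; hence the selection step succeeds. The point $y - e^k$ is dominated by $y \in \integ(X)$ and has non-negative integer coordinates (its $k$th coordinate is $1-1=0$), so the packing property gives $y - e^k \in \integ(X)$. Each update shifts a weight $c = \min\{\lambda^i_y,\, \sigma(\lambda^i)_k - x_k\} > 0$ from $y$ to $y - e^k$, leaving the coefficient sum unchanged and all coefficients non-negative (the if-branch keeps $\lambda^i_y - c \geq 0$; the else-branch makes it exactly $0$). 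Since $\lambda^0 = \lambda' \in \Lambda$ by Theorem~\ref{thm:DominatingConvexCombination}, induction gives $\lambda^i \in \Lambda$ throughout, so $\lambda'' \in \Lambda$.

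\emph{Correctness.} The key identity is $\sigma(\lambda^{i+1}) = \sigma(\lambda^i) - c\,e^k$, so an update in pass $k$ decreases only coordinate $k$ of $\sigma$, and never past $x_k$: in the else-branch $c = \lambda^i_y < \sigma(\lambda^i)_k - x_k$ leaves coordinate $k$ strictly above $x_k$ (the loop continues), whereas in the if-branch $c = \sigma(\lambda^i)_k - x_k$ makes it exactly $x_k$ (the loop ends). When pass $k$ begins, coordinate $k$ still has its initial value $\sigma(\lambda')_k \geq x_k$, because earlier passes only touched coordinates $< k$; when pass $k$ ends it equals $x_k$; and later passes leave it untouched. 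Hence $\sigma(\lambda'')_k = x_k$ for every $k$, i.e. $\sigma(\lambda'') = x$. Termination of pass $k$ is clear because every else-update permanently zeroes a distinct support point with $k$th coordinate $1$ (additions within pass $k$ create only points with $k$th coordinate $0$, which can never restore such a weight), and only finitely many such points exist before an if-update closes the pass.

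\emph{Iteration bound.} Let $q_k = \psi(\lambda^i)$ at the start of pass $k$, so $q_1 = \psi(\lambda')$. The number $T_k$ of iterations in pass $k$ is at most the number of support points with $k$th coordinate $1$ present at the start of the pass: every iteration selects one such point, these selected points are pairwise distinct, and none of them is produced during the pass; in particular $T_k \leq q_k$. Moreover $q_{k+1} \leq q_k + 1$, which is the crucial estimate: an else-update removes $y$ and adds $y - e^k$ and hence never increases the support, and since only an if-update can falsify the while condition (an else-update leaves $\sigma(\lambda^i)_k$ strictly above $x_k$), each pass performs at most one if-update, which enlarges the support by at most $1$. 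Therefore $q_k \leq \psi(\lambda') + (k-1)$ and $T_k \leq \psi(\lambda') + k$, and summing over $k = 1,\dots,n$ bounds the total number of iterations by $\sum_{k=1}^n \big(\psi(\lambda') + k\big) = |\psi(\lambda')|\,n + \tfrac{n^2+n}{2}$, as claimed.
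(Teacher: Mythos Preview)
Your proof is correct and follows essentially the same approach as the paper's: verify that the weight-shifting step is well defined and stays in $\Lambda$, show that pass $k$ adjusts only coordinate $k$ down to $x_k$, and bound the while-iterations in pass $k$ by the support size, which grows by at most one per pass. Your write-up is in fact more careful than the paper's on two points---you spell out why the points selected within a pass are pairwise distinct (no point with $k$th coordinate $1$ is ever re-created during pass $k$), and you isolate the recursion $q_{k+1}\le q_k+1$ explicitly---and you actually prove the slightly sharper $T_k\le q_k\le \psi(\lambda')+(k-1)$ before relaxing to $\psi(\lambda')+k$ to match the stated bound.
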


\begin{proof}
	In order to match $\sigma(\lambda')$ with $x$, algorithm \ref{alg:DominatingToExactDecomposition} considers each dimension $k$ separately. Clearly, while $\sigma(\lambda^i)_k$ is still greater than $x_k$, there must exist at least one point $y$ in $\lambda^i$ which has a value of $1$ in component $k$. If $\lambda_y^i$ is greater or equal to the difference between $\sigma(\lambda^i)_k$ and $x_k$, it is reduced by the value of this difference. To compensate for this operation, the coefficient of the point $y - e^k$, which is trivially contained in $X$ due to its packing property, is increased by the same value. Thus, the value of $\sigma(\lambda^{i + 1})_k$ is equal to $x_k$
	
	\begin{eqnarray*}
		\sigma(\lambda^{i + 1})_k	& = & \sigma(\lambda^i)_k - (\sigma(\lambda^i)_k - x_k)\tau(y)_k + (\sigma(\lambda^i)_k - x_k)\tau(y - e^k)_k\\
							& = & \sigma(\lambda^i)_k - (\sigma(\lambda^i)_k - x_k)\\
							& = & x_k,
	\end{eqnarray*}
	which means that the algorithm succeeded at computing a matching convex combination for $x$ at component $k$. It should be noted that the other components of $\lambda^{i+1}$ are unaffected by this update.
	
	
	Conversely, if $\lambda_y^i$ is less than the remaining difference between $\sigma(\lambda^i)_k$ and $x_k$, the point $y$ can be replaced completely by $y - e^k$. In this case the value of $\sigma(\lambda^{i + 1})_k$ remains greater than $x_k$
	
	\[\sigma(\lambda^{i + 1})_k = \sigma(\lambda^i)_k - \lambda^i_y \tau(y)_k + \lambda^i_y \tau(y - e^k)_k = \sigma(\lambda^i)_k - \lambda^i_y  > x_k\]
	Furthermore, the number of points in $\lambda^{i + 1}$ which have a value of $1$ at component $k$ is reduced by one with respect to $\lambda^i$. Considering that the number of points in $\lambda^i$ is finite, this implies that the algorithm must eventually compute a convex combination $\lambda''$ which matches $x$ at component $k$.


	To determine an upper bound on the number of iterations, it is helpful to observe that the size of the convex combination can only increase by $1$ for every iteration of the for loop, namely if $\lambda_y^i$ is greater than the difference between $\sigma(\lambda^i)_k$ and $x_k$. As a result, the number of points which comprise a convex combination during the $k$th iteration of the for loop is at most $\psi(\lambda') + k$. Since this number also gives an upper bound on the number of iterations performed by the while loop, the total number of iterations is at most
	
	\[\sum_{k = 1}^n (|\psi(\lambda')| + k) = n|\psi(\lambda')| + \sum_{k = 1}^n k = n|\psi(\lambda')| + \frac{n^2 + n}{2}.\]
	\qed
\end{proof}




\bibliographystyle{splncs}
\bibliography{literature}

\end{document}